\newtheorem{theorem}{Theorem}
\newtheorem{lemma}{Lemma}
\newtheorem{definition}{Definition}
\newtheorem{remark}{Remark}
\theoremstyle{definition}
\theoremstyle{definition}
\newcommand{\DOI}[1]{doi: \href{https://doi.org/#1}{#1}}
\begin{document}
	\title{\bf{Characterization of exact one-query quantum algorithms }}
	\author{Weijiang Chen$^{1, 2, \#}$, Zekun Ye$^{1, 2, \#}$, Lvzhou Li$^{1, 3, }$ }\thanks{lilvzh@mail.sysu.edu.cn. \\  $^{\#}$Equal contribution, both the first author.}%
	
	\affiliation{%
		$^1$ Institute of Computer Science Theory, School of Data and Computer Science, Sun Yat-Sen University, Guangzhou 510006, China
	}%
	
	\affiliation{%
		$^2$ Center for Quantum Computing, Peng Cheng Laboratory, Shenzhen 518055, China
	}%
	\affiliation{%
		$^3$ Ministry of Education Key Laboratory of Machine Intelligence and Advanced Computing (Sun Yat-sen University), Guangzhou  {\rm 510006}, China
	}%
	
	
	\begin{abstract}
		The quantum query models is one of the most important  models in quantum computing. Several well-known quantum algorithms are captured by this model, including the Deutsch-Jozsa algorithm, the Simon algorithm, the Grover algorithm and others. In this paper, we characterize the computational power of exact one-query   quantum algorithms. It is proved that  a total  Boolean function $f:\{0,1\}^n \rightarrow \{0,1\}$  can be exactly computed by a one-query quantum algorithm if and only if $f(x)=x_{i_1}$ or ${x_{i_1} \oplus x_{i_2} }$ (up to isomorphism). Note that unlike most work in the literature  based on the polynomial method, our proof does not resort to any knowledge about the polynomial degree of $f$.  \newline
		
		\noindent \DOI{10.1103/PhysRevA.101.022325} - Published 20 February 2020 
	\end{abstract}

	\maketitle

	\section{ Introduction}

	The classical decision tree models  have been well studied in classical computing,  and focus on problems such as the following: given a Boolean function $f:\{0,1\}^n\rightarrow \{0,1\}$,  how can we make  as few queries as possible  to the bits of $x$ in order to output the value of  $f(x)$? Quantum analogs, called quantum query models, have also attracted much attention in recent years \cite{Buhrman2002Complexity}. The implementation procedure of a quantum query model is  a  quantum query algorithm, which can be roughly described as follows:  it starts with a fixed state $|\psi_0\rangle$, and then performs the sequence of operations $U_0, O_x, U_1,  \ldots, O_x,U_t$, where $U_i$'s are unitary operators that do not depend on the input $x$ but the query $O_x$  does. This leads to the final state $ |\psi_x\rangle=U_tO_xU_{t-1}\cdots U_1O_xU_0|\psi_o\rangle$. The result is obtained by measuring the final state  $ |\psi_x\rangle$.

	The quantum query model can be discussed in two main settings: the exact setting and the bounded-error setting. A quantum query algorithm is said to compute a function $f$ exactly, if its output equals $f(x)$ with probability 1, for all inputs $x$.  It is said to {\it compute $f$ with bounded error},  if its output equals $f(x)$ with a  probability greater than a constant, for all inputs $x$. Roughly speaking, the query complexity of a function $f$  is the number of queries  that an  optimal (classical or quantum) algorithm should make in the worst case to compute $f$. The classical deterministic query complexity of  $f$ is denoted by $D(f)$, and the quantum query complexity in the  exact setting is denoted by $Q_E(f)$. In this paper, we focus on quantum query algorithms in the exact setting, which have been studied in much work \cite{Ambainis2013Superlinear, Ambainis2015Exact, Deutsch1992Rapid, Midrijanis2004Exact, Mihara2003Deterministic, He2018Exact, Montanaro2015On, Ambainis2013Exact, Ambainis2017Exact, Cai2018Optimal, Ambainis2016Superlinear, Cleve1998Quantum, Vasilieva2006Computing, Aaronson2016separations, Ambainis2017separations, Farhi1998alimit, Hayes2002quantum, Mischenko2015quantum, Braunstein2007exact, brassard1997exact, Qiu2018Generalized}. And quantum advantages were shown by comparing $Q_E(f)$ and $D(f)$. For total Boolean functions, Beals et al. \cite{beals2001quantum} showed that exact quantum query algorithms can only achieve  polynomial speed-up over classical counterparts. On the other hand, Ambainis et al. \cite{Ambainis2015Exact} proved that exact quantum algorithms have advantage for almost all Boolean functions. However, the biggest gap between $Q_E(f)$ and $D(f)$ is only a factor of 2 and is achieved by Deutsch's algorithm for a long time. In 2013,  a breakthrough result  was obtained by Ambainis, showing   the first total Boolean function for which exact quantum algorithms have superlinear advantage over  classical deterministic algorithms \cite{Ambainis2013Superlinear}. Moveover, Ambainis  \cite{Ambainis2016Superlinear} improved this result and presented a nearly quadratic separation in 2016.  For partial functions,  exponential separations between exact quantum and classical deterministic query complexity were obtained in several papers \cite{Deutsch1992Rapid,Shor1994Discrete,Simon1997On, Cai2018Optimal}. A typical example is the Deutsch-Jozsa algorithm \cite{Deutsch1992Rapid}.
	
	In this paper, we consider the following problem: what functions can be computed exactly by  one-query quantum algorithms (that can  make only one  query) ? Our motivation comes from the following two aspects:
	
	(i) Characterizing the computational power of a quantum computing model (or a kind of quantum algorithm) is of fundamental interest in the context of quantum 
complexity theory, and also is critical for  discovering quantum advantage.  Recently, characterization of one-query quantum algorithms in the bounded-error case has been considered by Aaronson et al. \cite{Aaronson2016Polynomials} and Arunachalam et al. \cite{Arunachalam2019Quantum}. But their results are not applicable to the exact case.
	
	(ii) Actually, the well-known Deutsch algorithm and Deutsch-Jozsa algorithm  belong to the class of exact one-query quantum algorithms. Then it is natural to ask what kind of functions (problems) can be computed  exactly by one-query quantum algorithms.

	We show that a total Boolean function $f$ can be computed exactly by a one-query quantum algorithm if and only if  $f(x)=x_{i_1}$ or $ x_{i_1} \oplus x_{i_2} $ (up to isomorphism). It is worth pointing out that unlike most work in the literature  based on the polynomial method, our proof does not depend on any knowledge about the polynomial degree of $f$. We hope this will illuminate a more general problem: what functions can be computed exactly by  $k$-query quantum algorithms?
	
	The remainder of this paper is organized as follows. The query models and the problem we consider are given in Section \ref{Pre}. The main results of this paper are presented in Section \ref{Result}. Finally, a conclusion is made in Section \ref{Con} and some further problems are proposed.
	
	\section{Preliminaries} \label{Pre}
	In this paper,  we consider  Boolean functions	$ f:\{0,1\}^n \rightarrow \{0,1\}$. Without special explanation, a function always means a total function, that is, it is defined for all $x\in\{0,1\}^n $. We will also refer to  partial functions that are defined on a subset $D\subset \{0,1\}^n$. Throughout this paper, a function is assumed  to be nonconstant, since the query complexity of a constant function is trivially zero. In the following, we first give an introduction about the query models, including both  classical and quantum cases, and then we describe  the problem  to be discussed.

	Given a Boolean function $ f:\{0,1\}^n \rightarrow \{0,1\}$, suppose $x=x_1x_2...x_n \in \{0,1\}^n $ is an input of $f$ and we use $x_i$ to denote its $i$-th bit. The goal of a query algorithm is to compute $f(x)$, given queries to the bits of $x$.
	
	\begin{figure}[ht]
		\centering
		\includegraphics[width=0.4\linewidth]{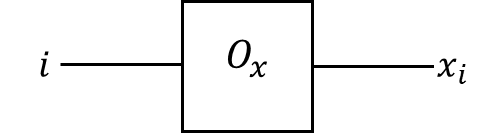}
		\caption{classical oracle}\label{CO}
		\label{classical oracle}
	\end{figure}
	
	In the classical case, the process of querying to $x$ is implemented by using the black box, which we call query oracle, as shown in Figure \ref{CO}. We want to compute $f(x)$ by using the query oracle as little as possible. 	A classical deterministic algorithm for computing $f$  can be described by a {\it decision tree}. 
	For example, suppose that we want to use a classical deterministic algorithm to compute  $f(x)=x_1 \land (x_2 \vee x_3)$.  Then  a decision tree $T$ for that  is depicted  in Figure \ref{decision tree}. 	Given an input $x$, the tree is evaluated as follows. It starts at the root. At each node, if it is a leaf, then  its label is output as the result for $f(x)$; otherwise, it queries its label variable  $x_i$. If $x_i$ = 0, then we recursively evaluate the left subtree. Otherwise, we recursively evaluate the right subtree. The query complexity of tree $T$ denoted by $D(T)$ is its depth, and we have $D(T)=3$ in this example. Given $f$, there exist different decision trees to compute it. The query complexity of $f$, denoted by $D(f)$, is defined as 
	\begin{equation}  \nonumber
	D(f)=\min_TD(T).
	\end{equation}

	\begin{figure}[H]
		\centering
		\includegraphics[width=0.3\linewidth]{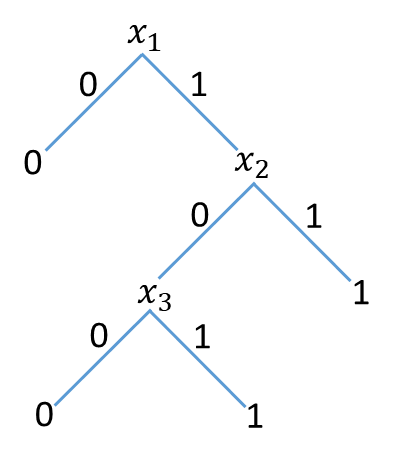}
		\caption{A decision tree $T$ for computing $f(x)=x_1 \land (x_2 \vee x_3)$}
		\label{decision tree}
	\end{figure}

	In the quantum case, the oracle is defined as $O_x|i,b\rangle = |i,b \oplus x_i\rangle$, where $i \in \{1,...,n\}$, as shown in  Figure 	\ref{quantum oracle}. Note that in this case, we are able to query more than one bit each time due to quantum superposition. A $T$-query quantum algorithm can be seen as a sequence of unitaries $U_TO_xU_{T-1}O_x...O_xU_0$, where $U_i$'s are fixed unitaries and $O_x$ depends on $x$ (See Figure 	\ref{T-query quantum algorithm}).
	
	\vbox{}
	
	\begin{figure}[H]
		\centering
		\includegraphics[width=0.4\linewidth]{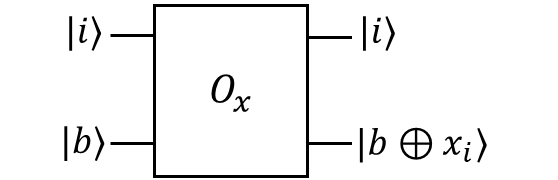}
		\caption{quantum oracle}
		\label{quantum oracle}
	\end{figure}

	\begin{figure}[H]
		\centering
		\includegraphics[width=0.8\linewidth]{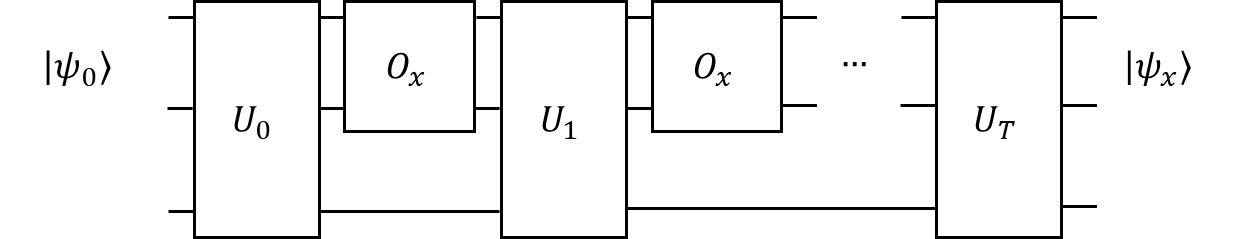}
		\caption{$T$-query quantum algorithm}
		\label{T-query quantum algorithm}
	\end{figure}

	The process of computation is as follows:
	\begin{itemize}		
		\item[ (1)]  Start with an initial state $|\psi_{0}\rangle$.
		
		\item[ (2)] Perform the operators $U_0, O_x, U_1, O_x...U_T$ in sequence, and then we obtain the state $|\psi_x\rangle=U_TO_xU_{T-1}O_x...U_0|\psi_{0}\rangle$.
		
		\item[ (3)] Measure $|\psi_x\rangle$ with a $0-1$ positive operator-valued measurement  \cite{Nielsen2002Quantum}. The measurement result
		is regarded as the output of the algorithm.
	\end{itemize}
	
	In the above,  we use $r(x)$ to denote the measurement result of $|\psi_x\rangle$. Let $P[\mathcal{A}]$ denote the probability that event $\mathcal{A}$ occurs.  If it satisfies:
	
	\begin{equation}  \nonumber
	\forall x , P[ r(x)=f(x) ]  \geq 1-\epsilon   , 
	\end{equation}
	where $\epsilon < \frac{1}{2} $, then the quantum query algorithm is said to compute $f(x)$ with bounded error $\epsilon$. If it satisfies:
	\begin{equation} \nonumber
	\forall x, P[ r(x)=f(x) ]  =1,
	\end{equation} 
	then it is said to compute $f(x)$ exactly.
	
	The exact quantum  query complexity of   $f$, denoted by $Q_E(f)$, is the minimum number of queries that a quantum query algorithm needs to compute $f$.
	The gap between $D(f)$ and $Q_E(f)$ is usually used to exhibit quantum advantage.

	In this paper, we want to characterize those functions $f$ that satisfy $Q_E(f)=1$. In other words, we consider this problem: what functions $f$ can be computed exactly by a one-query quantum algorithm?  In this case a quantum query algorithm is as shown  in Figure \ref{one-query}.
	\begin{figure}[H]
		\centering
		\includegraphics[width=0.8\linewidth]{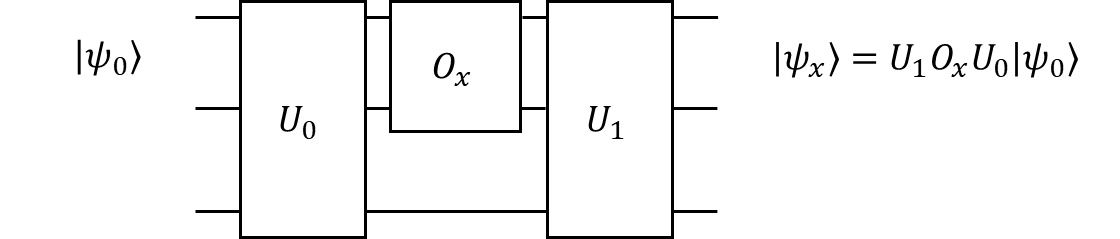}
		\caption{one-query quantum algorithm}
		\label{one-query}
	\end{figure}

	\section{Main result} \label{Result}
	Two  functions $f$ and $g$ over $\{0, 1\}^n$  are {\em isomorphic}  if they are equal up to negations and permutations of
	the input variables, and negation of the output variable. It is easy to see that for any two isomorphic Boolean functions $f$ and $g$, we have $Q_E(f)=Q_E(g)$. Now our main result is as follows.
	
	\begin{theorem}
		\label{theorem1}
		A total Boolean function $f:\{0,1\}^n \rightarrow \{0,1\}$ can be  computed exactly by a one-query quantum algorithm, if and only if $f(x)=x_{i_1}$ or $x_{i_1} \oplus x_{i_2} $ (up to isomorphism).
		
	\end{theorem}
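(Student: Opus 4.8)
The plan is to prove both implications, with essentially all of the work in the necessity direction. For \textbf{sufficiency} I would exhibit explicit one-query algorithms. If $f(x)=x_{i_1}$, a single query to bit $i_1$ (with the target qubit in $|-\rangle$, reading off the kicked-back phase) computes $f$ exactly. If $f(x)=x_{i_1}\oplus x_{i_2}$, I would run the Deutsch-type algorithm: prepare the index register in $\frac{1}{\sqrt2}(|i_1\rangle+|i_2\rangle)$ and the target in $|-\rangle$, so the oracle yields $(-1)^{x_{i_1}}|i_1\rangle+(-1)^{x_{i_2}}|i_2\rangle$, which up to a global phase is $|i_1\rangle+(-1)^{x_{i_1}\oplus x_{i_2}}|i_2\rangle$; a Hadamard-type measurement on the two-dimensional index subspace then returns $x_{i_1}\oplus x_{i_2}$ with certainty. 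The already-noted invariance of $Q_E$ under isomorphism extends this to all functions isomorphic to these.

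For \textbf{necessity}, write the final state as $|v_x\rangle=U_1 O_x|\phi\rangle$ with $|\phi\rangle=U_0|\psi_0\rangle$. First I would pass to the phase form of the oracle: in the $\{|+\rangle,|-\rangle\}$ basis of the target qubit the bit oracle acts trivially on the $|+\rangle$ sector and as $(-1)^{x_i}$ on the $|-\rangle$ sector, so decomposing $|\phi\rangle=\sum_{k=0}^{n}|\gamma_k\rangle$ into the mutually orthogonal pieces $|\gamma_k\rangle$ (index $k$ with target $|-\rangle$ for $k\ge1$, and everything with target $|+\rangle$ for $k=0$) gives $O_x|\phi\rangle=\sum_k \chi_k(x)|\gamma_k\rangle$, where $\chi_k(x)=(-1)^{x_k}$ for $k\ge1$ and $\chi_0\equiv1$. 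Since $U_1$ and the measurement are fixed, exact computation forces each $|v_x\rangle$ to be accepted with certainty; writing the accept operator as a POVM element $E$ with $0\preceq E\preceq I$, we have $\langle v_x|E|v_x\rangle=f(x)\in\{0,1\}$ while $\langle v_x|v_x\rangle=1$. Positivity then yields $E|v_x\rangle=|v_x\rangle$ when $f(x)=1$ and $E|v_x\rangle=0$ when $f(x)=0$, whence the key orthogonality $\langle v_x|v_y\rangle=0$ whenever $f(x)\ne f(y)$.

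The heart of the argument is a norm-counting step. For a variable $k$ relevant to $f$, pick $x$ with $f(x)\ne f(x^{(k)})$; flipping $x_k$ only flips the sign of the $|\gamma_k\rangle$ component, so $|v_{x^{(k)}}\rangle=|v_x\rangle-2\chi_k(x)|\gamma_k\rangle$, and orthogonality gives $0=\langle v_x|v_{x^{(k)}}\rangle=1-2\|\gamma_k\|^2$, i.e. $\|\gamma_k\|^2=\tfrac{1}{2}$. As $\sum_k\|\gamma_k\|^2=\|\phi\|^2=1$, at most two variables can be relevant. If exactly one is relevant, $f$ is a nonconstant function of a single bit, hence isomorphic to $x_{i_1}$. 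If exactly two, say $i_1,i_2$, the mass is saturated, so $|\phi\rangle=|\gamma_{i_1}\rangle+|\gamma_{i_2}\rangle$ and $|v_x\rangle=(-1)^{x_{i_1}}|\gamma_{i_1}\rangle+(-1)^{x_{i_2}}|\gamma_{i_2}\rangle$ depends only on $(x_{i_1},x_{i_2})$. Then $|v_{11}\rangle=-|v_{00}\rangle$ and $|v_{10}\rangle=-|v_{01}\rangle$, so the acceptance probabilities, being invariant under a global sign, satisfy $f(00)=f(11)$ and $f(01)=f(10)$; the only nonconstant two-variable function with this symmetry is $x_{i_1}\oplus x_{i_2}$ up to output negation, which in particular excludes the AND-type functions and pins down $f$ as isomorphic to $x_{i_1}\oplus x_{i_2}$.

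I expect the main obstacle to be making the reduction to the phase oracle fully rigorous for an \emph{arbitrary} one-query algorithm, in particular handling a general POVM rather than a projective measurement and confirming that the $|\gamma_k\rangle$ really are orthogonal, so that the single-flip inner product collapses cleanly to $1-2\|\gamma_k\|^2$. Once that bookkeeping is in place, the counting bound and the two-variable symmetry argument are short, and this route never invokes the degree of $f$ as a polynomial.
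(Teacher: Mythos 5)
Your necessity argument is correct and shares the paper's skeleton (orthogonality of final states for inputs with different $f$-values, then a norm-counting bound showing at most two relevant variables), but it diverges in two substantive ways. First, the decomposition: the paper works directly with amplitudes $\alpha_{ijk}$ of $U_0|\psi_0\rangle$ in the computational basis and proves, for a relevant variable $i$, that $\sum_k|\alpha_{i0k}-\alpha_{i1k}|^2=1$; your phase-oracle decomposition $|\phi\rangle=\sum_k|\gamma_k\rangle$ is equivalent (indeed $\|\gamma_i\|^2=\tfrac12\sum_k|\alpha_{i0k}-\alpha_{i1k}|^2$), but it turns the paper's inequality-based count (via $|a-b|^2\le 2(|a|^2+|b|^2)$) into an exact mass budget $\sum_k\|\gamma_k\|^2=1$ with each relevant variable costing exactly $\tfrac12$. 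Second, and more importantly, this exactness pays off in the two-variable case: the paper disposes of the AND/OR cases by \emph{citing} the external result of Ambainis, Gruska and Zheng that $Q_E(\mathrm{AND}_2)=2$, whereas your saturation argument ($\gamma_0=0$ and all irrelevant components vanish, so $|v_{11}\rangle=-|v_{00}\rangle$ and $|v_{10}\rangle=-|v_{01}\rangle$, forcing $f(00)=f(11)$ and $f(01)=f(10)$) rules them out self-containedly via global-phase invariance. Your route also makes rigorous, through POVM positivity, the perfect-distinguishability-implies-orthogonality step that the paper merely asserts. The net effect is a proof that is fully self-contained where the paper's is not, at the cost of the extra bookkeeping you anticipate (which does go through: the $|\gamma_k\rangle$ are mutually orthogonal since they live on orthogonal index or target components).

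One minor slip on the sufficiency side: for $f(x)=x_{i_1}$, preparing the index in $|i_1\rangle$ and the target in $|-\rangle$ produces only the \emph{global} phase $(-1)^{x_{i_1}}$, which no measurement can read off. The fix is trivial---query with the target in $|0\rangle$, obtaining $|i_1\rangle|x_{i_1}\rangle$, and measure the target---or, as the paper does, simply observe that one classical query suffices.
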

	
\begin{remark}
	In the above theorem, we  consider only total functions. A more interesting  problem is to characterize the partial functions (promise problems) that can be computed exactly by  one-query quantum algorithms. This problem seems to be more complicated, and the method  used here may not be applicable to  partial functions. 
	
	\end{remark}
	First, it is easy to see the sufficiency. If $f(x)=x_{i_1} \oplus x_{i_2} $, then it can be computed exactly by the Deutsch algorithm.  If $f(x)=x_{i_1}$, there obviously exists a one-query quantum algorithm to do that. Therefore, the key to prove Theorem \ref{theorem1} is the necessity. For that we will prove the following: (a) if $f$ can be exactly computed by a one-query quantum algorithm, then it depends on at most two variables, and (b) furthermore it must be in the above form.
	
	\begin{definition}
		\label{definition}
		For a total Boolean function $f:\{0,1\}^n \rightarrow \{0,1\}$, $f$ is said to depend on the $i$th variable of the input, if there exists $x \in \{0,1\}^n$ such that $f(x) \neq f(x^i)$, where $x^i$ is the same as $x$ except for the $i$-th bit being flipped.	
	\end{definition}
	For example, if $f(x)=x_{i_1} \oplus x_{i_2} $, then $f$ depends on the $i_1$-th and $i_2$-th variables.	
	

	\subsection{Proof of necessity}

	Now suppose that $f:\{0,1\}^n \rightarrow \{0,1\}$ can be computed exactly by a one-query quantum algorithm as shown in Figure \ref{one-query}. We denote $$U_0|\psi_0\rangle = \sum_{i,j,k} \alpha_{ijk}|i\rangle|j\rangle|k\rangle, $$
	where 	$i\in\{1,2,\cdots,n\}$, $j\in\{0,1\}$, and $|k\rangle$ is an arbitrary ancilla register, corresponding to the third line in Figure 5. We have the following observation.
	\begin{lemma}
		For a pair of inputs $x,y$ such that $f(x) \neq f(y)$, let $S = \{i|x_i \neq y_i\}$. Then it holds that 
		\begin{equation}\sum_{i \in S}\sum_k|\alpha_{i0k} - \alpha_{i1k}|^2 = 1.\label{lm1}	\end{equation}
	\end{lemma}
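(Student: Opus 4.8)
The plan is to exploit the fact that exact computation forces the two final states for a pair of inputs with different function values to be orthogonal, and then to translate that orthogonality into the stated amplitude identity by unwinding the action of the oracle. Write $|\phi_x\rangle = O_x U_0|\psi_0\rangle$, so the final state is $|\psi_x\rangle = U_1|\phi_x\rangle$.

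First I would establish orthogonality of $|\psi_x\rangle$ and $|\psi_y\rangle$. The measurement is a $0$--$1$ POVM $\{E_0,E_1\}$ with $E_0+E_1=I$ and $E_0,E_1\geq 0$. Exactness means that if $f(x)=0$ then $\langle\psi_x|E_1|\psi_x\rangle=0$, and since $E_1\geq 0$ this forces $E_1|\psi_x\rangle=0$; symmetrically, if $f(y)=1$ then $E_0|\psi_y\rangle=0$. Hence for $f(x)\neq f(y)$,
\[
\langle\psi_x|\psi_y\rangle=\langle\psi_x|(E_0+E_1)|\psi_y\rangle=0 .
\]
Because $U_1$ is unitary, this is equivalent to $\langle\phi_x|\phi_y\rangle=0$.

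Next I would compute $\langle\phi_x|\phi_y\rangle$ directly from the oracle action $O_x|i\rangle|j\rangle|k\rangle=|i\rangle|j\oplus x_i\rangle|k\rangle$. The registers $|i\rangle$ and $|k\rangle$ contribute Kronecker deltas, while the target register contributes $\langle j\oplus x_i|j'\oplus y_i\rangle$, which is nonzero exactly when $j\oplus j'=x_i\oplus y_i$. Splitting on whether $i\in S$ (where $x_i\neq y_i$) or $i\notin S$ gives
\[
\langle\phi_x|\phi_y\rangle=\sum_{i\notin S}\sum_k\left(|\alpha_{i0k}|^2+|\alpha_{i1k}|^2\right)+\sum_{i\in S}\sum_k 2\,\mathrm{Re}\!\left(\overline{\alpha_{i0k}}\,\alpha_{i1k}\right).
\]

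Finally, invoking the normalization $\sum_{i,j,k}|\alpha_{ijk}|^2=1$ and adding and subtracting the $i\in S$ terms, together with the identity $|\alpha_{i0k}-\alpha_{i1k}|^2=|\alpha_{i0k}|^2+|\alpha_{i1k}|^2-2\,\mathrm{Re}(\overline{\alpha_{i0k}}\,\alpha_{i1k})$, I would arrive at
\[
\langle\phi_x|\phi_y\rangle=1-\sum_{i\in S}\sum_k|\alpha_{i0k}-\alpha_{i1k}|^2 .
\]
Setting the left-hand side to $0$ then yields equation (\ref{lm1}). The only step requiring genuine care is the first one---deducing true orthogonality of the final states from exactness via the positivity of the POVM elements; the remaining computation is a routine expansion of the inner product followed by bookkeeping with the normalization constraint.
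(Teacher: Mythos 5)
Your proof is correct and follows essentially the same route as the paper: transfer orthogonality of the final states to $|\phi_x\rangle,|\phi_y\rangle$ via unitarity of $U_1$, expand the inner product with the oracle action splitting on $i\in S$ versus $i\notin S$, and combine with the normalization $\sum_{i,j,k}|\alpha_{ijk}|^2=1$. The only difference is that you justify the perfect-distinguishability step explicitly (via positivity of the POVM elements forcing $E_1|\psi_x\rangle=0$ and $E_0|\psi_y\rangle=0$), which the paper simply asserts; this is a welcome added rigor but not a different argument.
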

	
	\begin{proof}[\textbf{Proof}]  
		Note that $|\psi_x\rangle = U_1 O_x U_0|\psi_0\rangle $ and  let $|\phi_x\rangle = O_x U_0|\psi_0\rangle$. Then we get
		$$	|\phi_x\rangle =O_x \sum_{i,j,k} \alpha_{ijk}|i\rangle|j\rangle|k\rangle = \sum_{i,j,k} \alpha_{ijk}|i\rangle|j \oplus x_i\rangle |k\rangle. $$  The assumption that $f$ can be computed exactly 	implies that $|\psi_x\rangle$ and $|\psi_y\rangle$ can be perfectly distinguished for $x,y$ satisfying $f(x)\neq f(y)$. Thus, the two states are mutually orthogonal, that is 
		$$ \langle \psi_x|\psi_y\rangle=0.$$ Since unitary operators do not change the orthogonality between two states, equivalently, there is
		\begin{equation*}
		\langle\phi_x|\phi_y\rangle = 0,\label{eq:cond}
		\end{equation*}
		from which it follows that
		\begin{align}
		\langle \phi_x|\phi_y \rangle& =\sum_{i,j,k} \alpha^{*}_{ijk}\langle i| \langle j \oplus x_i| \langle k|    \sum_{p,q,r} \alpha_{pqr} |p\rangle|q \oplus y_p\rangle|r\rangle \nonumber\\
		&= \sum_{i,j,k,q} \alpha^{*}_{ijk}\alpha_{iqk} \langle j \oplus x_i| q \oplus y_i \rangle = 0.\label{eq:2}
		\end{align}
		Furthermore, Eq. (\ref{eq:2}) can be rewritten as
		\begin{equation}
		\sum_{i \notin S}\sum_{k}(|\alpha_{i0k}|^2 + |\alpha_{i1k}|^2)+\sum_{i \in S}\sum_{k}(\alpha^{*}_{i0k} \alpha_{i1k}+\alpha^{*}_{i1k}  \alpha_{i0k}) = 0.\label{eq:3}
		\end{equation}
		Moreover, note that $\sum_{i,j,k}|\alpha_{ijk}|^2 = 1$, that is
		\begin{equation}
		\sum_{i \notin S}\sum_{k}(|\alpha_{i0k}|^2 + |\alpha_{i1k}|^2) + \sum_{i \in S}\sum_{k}(|\alpha_{i0k}|^2 + |\alpha_{i1k}|^2) = 1.\label{eq:4}
		\end{equation}
		Therefore, by subtracting Eq.	(\ref{eq:3}) from Eq. (\ref{eq:4}), we obtain
		$$\sum_{i \in S}\sum_{k}\left[(|\alpha_{i0k}|^2 + |\alpha_{i1k}|^2)-(\alpha^{*}_{i0k} \alpha_{i1k}+\alpha^{*}_{i1k} \alpha_{i0k})\right] = 1,
		$$
		which is equivalent to
		$$
		\sum_{i \in S}\sum_k|\alpha_{i0k} - \alpha_{i1k}|^2 = 1.
		$$ 
	\end{proof}
	Now we are ready for proving the necessity.
	\begin{lemma}
		\label{lemma2}
		If a total Boolean function $f:\{0,1\}^n \rightarrow \{0,1\}$ can be computed exactly by a one-query quantum algorithm, then $f$ depends on at most two variables of $x$, and furthermore $f$ is in the form $f(x)=x_{i_1}$ or $x_{i_1} \oplus x_{i_2} $ (up to isomorphism).
	\end{lemma}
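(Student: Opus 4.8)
The plan is to extract everything I need from Lemma 1 by introducing, for each index $i\in\{1,\dots,n\}$, the nonnegative quantity
\[
\beta_i := \sum_k |\alpha_{i0k}-\alpha_{i1k}|^2 .
\]
Lemma 1 then reads: for every pair $x,y$ with $f(x)\neq f(y)$ and $S=\{i: x_i\neq y_i\}$, we have $\sum_{i\in S}\beta_i=1$. Two specializations of this identity drive the whole argument. First, if $f$ depends on the $i$th variable, then by definition there is an $x$ with $f(x)\neq f(x^i)$; since $x$ and $x^i$ differ only in bit $i$ we have $S=\{i\}$, and Lemma 1 forces $\beta_i=1$. Thus every relevant variable contributes exactly $1$ to $\sum_i\beta_i$.

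Second, I would bound $\sum_i\beta_i$ globally. Using $|\alpha_{i0k}-\alpha_{i1k}|^2\le 2(|\alpha_{i0k}|^2+|\alpha_{i1k}|^2)$ together with the normalization $\sum_{i,j,k}|\alpha_{ijk}|^2=1$, one obtains $\sum_i\beta_i\le 2$. Combining the two observations settles the first claim with essentially no case analysis: if $f$ depended on three or more variables, the corresponding $\beta_i$'s would already sum to at least $3>2$, a contradiction. Hence $f$ depends on at most two variables.

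For the second claim I would split on the number of relevant variables, which (since $f$ is nonconstant) is either one or two. If $f$ depends on a single variable $x_{i_1}$, then being nonconstant it equals $x_{i_1}$ or its negation, i.e. $x_{i_1}$ up to output negation. If $f$ depends on exactly two variables $x_{i_1},x_{i_2}$, I would fix all the ignored coordinates and regard $f$ as a two-variable function $g$ depending on both of its arguments, noting that $\beta_{i_1}=\beta_{i_2}=1$ by the single-flip argument. The crux is to show $g$ must be XOR-type. Suppose not; then $g$ is not constant on both parity classes $\{(0,0),(1,1)\}$ and $\{(0,1),(1,0)\}$, so there exist inputs differing in \emph{both} coordinates on which $g$ takes different values. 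Lifting these to full inputs $x,y$ that agree on all ignored coordinates yields $S=\{i_1,i_2\}$ with $f(x)\neq f(y)$, whence Lemma 1 gives $\beta_{i_1}+\beta_{i_2}=1$, contradicting $\beta_{i_1}+\beta_{i_2}=2$. Therefore $g$ equals $x_{i_1}\oplus x_{i_2}$ or its negation, i.e. $x_{i_1}\oplus x_{i_2}$ up to isomorphism.

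The one step requiring genuine care---the main obstacle---is the two-variable case, namely ruling out the AND-type functions. The clean way is the parity-class dichotomy above, which guarantees that any non-XOR two-variable function admits a pair of inputs flipping both relevant bits while changing the output; this is precisely the configuration $S=\{i_1,i_2\}$ on which the value $\beta_{i_1}+\beta_{i_2}=2$ coming from single flips collides with the value $1$ demanded by Lemma 1. Everything else follows directly from pairing the local identity $\beta_i=1$ with the global bound $\sum_i\beta_i\le 2$.
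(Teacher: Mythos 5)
Your proposal is correct, and the first half (the bound of at most two relevant variables via $\beta_i=1$ for each relevant variable and $\sum_i\beta_i\le 2$ from the normalization) is exactly the paper's argument. The second half, however, takes a genuinely different and more self-contained route. The paper normalizes so that $C_{00}\subseteq S_0$, $C_{10}\subseteq S_1$ and then enumerates the four possible placements of $C_{01}$ and $C_{11}$, disposing of the two AND-type cases by \emph{citing} an external result (Ambainis, Gruska, and Zheng) that $AND_2$ cannot be computed exactly with one query. You instead rule out all non-XOR two-variable functions in one stroke by reapplying Lemma 1 with $|S|=2$: if $f$ restricted to $(x_{i_1},x_{i_2})$ is not a function of the parity $x_{i_1}\oplus x_{i_2}$ alone, it is nonconstant on some parity class $\{(0,0),(1,1)\}$ or $\{(0,1),(1,0)\}$, which yields inputs $x,y$ with $f(x)\neq f(y)$ differing in exactly the two relevant coordinates, so Lemma 1 forces $\beta_{i_1}+\beta_{i_2}=1$, contradicting $\beta_{i_1}=\beta_{i_2}=1$ from single flips. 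This is a real improvement in spirit: it keeps the entire necessity proof internal to Lemma 1 (consistent with the paper's stated goal of avoiding outside machinery), whereas the paper's case analysis trades that uniformity for concreteness but leans on a quoted lower bound for $AND_2$. One small caution: your phrase ``$g$ is not constant on both parity classes'' should be read as ``$g$ fails to be constant on at least one parity class'' (e.g., $AND$ is constant on $\{(0,1),(1,0)\}$ but not on $\{(0,0),(1,1)\}$); with that reading the dichotomy and the lifting to full inputs (fixing the irrelevant coordinates) are sound.
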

	\begin{proof}[\textbf{Proof}] 
		We denote the Hamming distance of $x$ and $y$ by $d(x,y)$. For a total function $f$, if $f$ depends on some variable $x_i$, then there exist $x,y\in\{0,1\}^n$ such that $d(x,y) = 1, x_i \neq y_i$ and $f(x) \neq f(y)$. Thus, in this case $S=\{i\}$, and by Lemma \ref{lm1} we have 
		$$
		\sum_k|\alpha_{i0k}-\alpha_{i1k}|^2=1.
		$$
		Now suppose $f$ depends on $t$ variables $x_{i_1},x_{i_2},...,x_{i_t}$. It follows that
		$$
		\sum_{r=1}^{t}	\sum_k|\alpha_{i_r0k}-\alpha_{i_r1k}|^2=t.
		$$
		
		Furthermore, we have 
		\begin{equation*}\sum_{r=1}^{t}	\sum_k|\alpha_{i_r0k}-\alpha_{i_r1k}|^2 \leq 2\sum_{r=1}^{t}	\sum_k|\alpha_{i_r0k}|^2+|\alpha_{i_r1k}|^2\leq 2,\end{equation*}
		where the first equality follows from the observation that $|a-b|^2 \leq 2(|a|^2+|b|^2)$ for any two complex numbers $a, b$, and the second equality holds because we have  $\sum_{i,j,k}|\alpha_{ijk}|^2=1$.
		Thus,  we get $t\leq 2$, that is, $f$  depends on at most two variables.
		
		Next, we prove $f(x)=x_{i_1}$ or $ x_{i_1} \oplus x_{i_2}  $ (up to isomorphism). 
		Suppose that  $f$ depends on at most two variables $x_{i_1}$ and $x_{i_2}$. 
		Denote $C_{00} = \{ x|x_{i_1}=x_{i_2}=0\}, C_{01}=\{x|x_{i_1}=0,x_{i_2}=1\}, C_{10}=\{x|x_{i_1}=1,x_{i_2}=0\}, C_{11}=\{x|x_{i_1}=1,x_{i_2}=1\}$. Let $S_0=\{x|f(x)=0\}$ and $S_1=\{x|f(x)=1\}$. Without loss of generality, suppose $C_{00} \subseteq S_0, C_{10} \subseteq S_1$.  Below we show $f = x_{i_1} $ or $f = x_{i_1} \oplus x_{i_2}$.  Other cases can be discussed in a similar way.
		\begin{enumerate}
			\item[(1)] $C_{01} \subseteq S_0, C_{11} \subseteq S_1$. In this case, we have $f = x_{i_1} $.
			
			\item[(2)] $C_{01} \subseteq S_1, C_{11} \subseteq S_0$. In this case, we have $f = x_{i_1} \oplus x_{i_2}$.
			
			\item[(3)] $C_{01} \subseteq S_0, C_{11} \subseteq S_0$. In this case, $f=x_{i_1} \land \neg{x_{i_2}}$, which is isomorphic to $AND_2$  that can not be computed by one-query quantum algorithms as indicated in \cite{Ambainis2015Exact}. Thus, this case is impossible.
			
			\item[(4)] $C_{01} \subseteq S_1, C_{11} \subseteq S_1$. In this case, $f=x_{i_1} \vee x_{i_2} $, which is also isomorphic to $AND_2$. Thus, this case is  impossible.
		\end{enumerate}

	\end{proof}

	\section{Conclusion} \label{Con}
	In this paper we have characterized the power of exact one-query quantum algorithms for total functions. In conclusion, a total Boolean function $f:\{0,1\}^n  \rightarrow \{0,1\}$ can be  computed by exactly a one-query algorithm if and only if
	$f(x)=x_i$ or  $x_{i_1} \oplus x_{i_2} $  (up to isomorphism). Note that unlike most work in  the  literature  based  on  the  polynomial  method,  our proof doe
	s not resort  to any knowledge about the polynomial  degree  of $f$.  We hope it will illuminate two more general problems that are worthy of further consideration as follows.
	
	\textbf{A}. \textbf{Characterization of partial  functions that can be computed exactly by  a one-query quantum algorithm.} 
	Given a partial Boolean function $f:D \rightarrow \{0,1\}$, where $D \subset \{0,1\}^n $, how do we determine whether there exists a one-query quantum algorithm that computes it exactly? Furthermore, can we discover all those partial functions $f$ that can be computed exactly by a one-query quantum algorithm?
	
	\textbf{B}. \textbf{Characterization of functions that can be computed exactly by a $k$-query quantum algorithm.}
	A more interesting problem is to discuss the power of exact $k$-query quantum algorithms, although for some functions with a specific property, we know their quantum query complexity. There is no a general conclusion to characterize the power of exact $k$-query quantum algorithms. Figuring out this problem is useful for further understanding quantum query algorithms and inspiring us to find more problems with quantum advantage.
	
	\section*{Acknowledgements}{This work is supported by the National Natural Science Foundation of China (Grant No. 61772565), the Natural Science Foundation of Guangdong Province of China (Grant No. 2017A030313378), the Science and Technology Program of Guangzhou City of China (Grant No. 201707010194), the Key Research and Development  project of Guangdong Province (Grant No. 2018B030325001) and the Open Research Fund from State Key Laboratory of High Performance Computing of China (Grant No. 201901-03).
	}
	
	\bibliographystyle{unsrt}

\end{document}